\title{Muddy Waters}
\author{
Hans van Ditmarsch
\institute{University of Toulouse, CNRS, IRIT}
\email{hansvanditmarsch@gmail.com}
}
\newcommand{\eq}{\leftrightarrow}
\newcommand{\imp}{\rightarrow}
\newcommand{\Imp}{\Rightarrow}
\newcommand{\et}{\wedge}
\newcommand{\vel}{\vee}
\newcommand{\Et}{\bigwedge}
\newcommand{\Vel}{\bigvee}
\renewcommand{\phi}{\varphi}
\newcommand{\union}{\cup}
\newcommand{\Union}{\bigcup}
\newcommand{\inter}{\cap}
\newcommand{\Inter}{\bigcap}
\newcommand{\dia}[1]{\langle{#1}\rangle}
\newcommand{\M}{\widehat{K}}
\newcommand{\Nat}{\mathbb N}
\newcommand{\Naturals}{\Nat}
\newtheorem{theorem}{Theorem}
\newtheorem{proposition}[theorem]{Proposition}
\newcommand{\weg}[1]{}
\newcommand{\lbr}{[\![}
\newcommand{\rbr}{]\!]}
\newcommand{\I}[1]{\lbr #1 \rbr}
\newcommand{\solvable}{\ensuremath{\mathbf{solvable}}}
\newcommand{\bell}{\ensuremath{\mathbf{bell}}}
\newcommand{\last}{\ensuremath{\mathbf{last}}}
\newcommand{\muetzen}{\ensuremath{\mathbf{muetzen}}}
\newcommand{\notlast}{\ensuremath{\mathbf{nlast}}}
\begin{document}
\maketitle

\begin{abstract}
In the 2013 Advent calender of the Berlin Mathematics Research Center MATH+, Gerhard Woeginger presents a novel hat problem with an uncommon initial announcement. Although the information given is insufficient for the hat bearers to learn their colour, they are informed that the colours have been chosen so that they can learn their colour. We formalize this announcement in public announcement logic and in an extension of public announcement logic with fixpoints.
\end{abstract}

\section{A new hats riddle}

\subsection{History of the muddy children puzzle}

The Muddy Children puzzle and similar hats riddles have delighted puzzle book enthusiasts since at least the 1930s \cite{kraitchik:1942,littlewood:1953,vantilburg:1956,gamowetal:1958,smullyan:1982} (with older roots going back to at least the early 19th century \cite{rabelais:1823}, see \cite{hvdetal.puzzle:2015,hvdetal.jlc:2007} for further notes on its history) and has also fuelled the development of artificial intelligence and of dynamic epistemic logic \cite{mccarthy:1978,barwise:saos,mosesetal:1986,Emde:84,plaza:1989,hvdetal.del:2007}.

From an epistemic logical perspective such knowledge puzzles tend to follow a certain pattern. First, the problem solver has to represent the initial uncertainty of the agents in a multi-agent Kripke model. This initial model represents the background knowledge. Second, for the agents this may be insufficient information in order to solve their uncertainty by ignorance or knowledge statements (for example, when their ignorance is already common knowledge). Some extra information (an initial announcement) is then needed in order to make the agents' ignorance statements informative. Third, the agents then keep going announcing their ignorance until the problem is solved. In the case of muddy children: (i) the initial model encodes that the children can only see the mud on each other's foreheads (it is common knowledge that no child knows whether it is muddy); (ii) father then announces that  at least one child is muddy; (iii) the children keep `announcing' their ignorance of muddy in rounds until they know whether they are muddy. To synchronize the children's behaviour, a round is characterized by father repeating his request, or clapping in his hands, or ringing a bell. Stage (ii) to `get the induction going' is not always there, for example not in Consecutive Numbers \cite{littlewood:1953} and not in Sum and Product \cite{freudenthal:1969}.

\subsection{M\"utzen}

The Berlin Mathematics Research Center MATH+ annually presents an Advent calender with mathematical riddles, of which the solutions are subsequently given after Christmas. In the 2013 calender problem number 10, by Gerhard Woeginger, is called M\"utzen (`Hats') \cite{woegingermuetzen:2013}. It is a variation of the muddy children and hats puzzles with many hats and many announcements, in a Christmas setting where Santa informs his Little Helpers (`Wichtel', Gnomes) of the solution requirement. It is accompanied by a beautiful illustration implicitly challenging the reader to determine the famous mathematicians depicted there as gnomes with coloured hats (Ren\'e Descartes, Leonhard Euler and Kurt G\"odel are among them). M\"utzen is given in the Appendix in its original playful and beautiful German version and in English translation.

In the riddle's description we can easily see the three stages needed to solve it. First, the initial model encodes 126 gnomes only seeing the colours of others' hats. Nothing is known about these colours. Second, Santa makes a, in this case, curious initial announcement. We call this announcement \solvable.
\begin{quote} {\em I chose the hat colours very carefully so that each of you can actually determine their colour through thinking during the game.} \hfill $(\solvable)$
\end{quote}
Third, the bell starts ringing, where at each ring those leave who know their colour, until finally all are having tea and cake. 

What does Santa's initial announcement \solvable\ say? In muddy children and hat problems it is always given what the set of colours is, or that children can be muddy or not, which can be seen as the set of colours consisting of black and white. In M\"utzen we only know that the hats are coloured. Just as in the muddy children problem and in other hat problems, without additional information that the problem solver provides (`at least one child is muddy') they never learn anything from the bell being rung and such. 

The information content of \solvable\ is much harder to grasp than that of father's announcement that at least one child is muddy. The latter can be easily verified, unlike the former. Atto gives an important hint when he says to Santa `For example, if each of us had a different colour of hat, then no one would be able to figure out what colour it is'. Let us be precise about Atto's observed inability to solve the riddle. 

If all gnomes wear a different unique colour, any gnome $i$ wearing colour $c$ also considers it possible that its hat has colour $c'$ for some colour $c' \neq c$ that is also not worn by any of the other gnomes. Therefore no gnome knows the colour of its hat and therefore no one will leave the room when the bell rings. And also not in the next iteration. Not ever. The later distribution of hats reveals that Atto cannot have seen $125$ all different colours. But this does not matter: as long as any gnome considers it possible that its own colour is unique, it considers it possible that this is any of two unseen colours and will therefore not leave the room when the bell rings. Therefore, it cannot have a unique colour, and therefore no gnome can have a unique colour, and therefore of each colour occurring there must be at least two hats. So that all the colours it sees, are all the colours there are. This demonstrates that the announcement \solvable\ is therefore {\em at least as} informative as that of the announcement: 
\begin{quote} {\em For each coloured hat worn by a gnome there must be another gnome wearing a hat with the same colour. \hfill $(\solvable')$} \end{quote} We call that announcement $\solvable'$. This is indeed the spark that gets the induction going, so that finally all can determine their colour. That demonstrates that \solvable\ is {\em at most as} informative as $\solvable'$. Therefore, \solvable\ and $\solvable'$ have the same information content. 

Let us sketch how the iterations proceed that solve the problem:

At the first iteration all gnomes seeing only one gnome with a hat of a particular colour conclude that they must also have that colour, and leave the room. At the second iteration all (only) seeing two gnomes with a hat of a particular colour, leave. And so on. If in iteration $n$ no one sees $n$ gnomes of some colour, no one leaves the room; which is on condition, fulfilled in the riddle, that different colours can be seen. For example, let there be only $12$ gnomes, with $2$ white and $2$ black and $8$ rose hats. Then at ring $1 $ the gnomes with white and black hats leave, but already at ring $2$ the gnomes with rose hats leave. \emph{It does not take more rounds.} But if there had been $10$ more pink and $10$ more yellow hatted gnomes, there would have been no such shortening of rounds for the $8$ rose hatted gnomes. Because then: at ring $1$ white and black leave, at rings $2$ to $6$ no one leaves, at ring $7$ the gnomes with rose hats leave, at ring $8$ noone leaves, and at ring $9$ the gnomes with pink and yellow hats leave. In the M\"utzen riddle shortening of rounds does not play a role, as there are two groups of the same maximum size that step forward in the last round. We do not know if this was by design, in order to avoid this extra modelling complication.

\subsection{Solvable and unsolvable variations of muddy children} \label{solvableunsolvable}

Santa's announcement \solvable\ reduces the set of all colour distributions to the largest subset $X$ such that all gnomes will leave the room after some finite iteration of ringing the bell, which informally means that this is a \emph{fixpoint} with respect to solving the problem: any set of colour distributions of which $X$ is a strict subset will not solve the problem, although there are sets of colour distributions that are strict subsets of $X$ and that solve the problem. Furthermore, no actual colours are mentioned in \solvable. 

Before we delve into the logic and formally introduce fixpoints, let us investigate fixpoints in the simpler setting of the muddy children problem, with colours black (muddy) and white (clean), and for three children only. First, the standard solution: given that children only see the foreheads of others, father says that at least one child is muddy, after which at each round the children who know whether they are muddy step forward. This takes at most three rounds. 

In Figure~\ref{fig.muddy}.i we visualize the execution of this branching protocol in a linear way, as iterated refinement of epistemic models, because all branching is a consequence of mutually exclusive public announcements. We also see father's initial announcement not as a model restriction but as a model refinement. The cube and its subsequent submodels represent the uncertainty of children $a,b,c$. The states (or worlds) are named with triples of digits indicating whether child $a,b,c$ in this order are muddy (1) or clean (0). Agent labelled links between states indicate that the states are indistinguishable for those agents. The double arrows indicate model updates. Each arrow linking two models represents non-deterministic choice between mutually exclusive alternatives, where the first arrow is choice between two alternatives, often known as a \emph{test} (whether), denoted with `?'. 
The final `arrow' is a refinement that does not change the model and is therefore denoted $=$ instead of $\Imp$.

\paragraph{Different ways to solve muddy children} 
We will not explain this standard solution of the muddy children problem for the umptieth time. What we wish to point out is that apart from this standard initial refinement into a seven-state and a one-state model, in some other initial refinements  all children also eventually learn whether they are muddy, whereas in yet other initial refinements this is not the case. We will not systematically review all these restrictions, but merely make some pregnant observations: 

First, \emph{any} initial restriction to seven states results in all children learning whether they are muddy, and by the exact same updates. 

Second, \emph{some} restrictions result in some children learning whether they are muddy, but not all, or not always. A typical one is the refinement where we separate an `edge', such as $000$---$a$---$100$ from its complement. In both parts $a$ will never learn whether it is muddy. There is too much symmetry in the restriction. 

Third, it is not so clear how to see any of the restrictions where all learn whether they are muddy as a fixpoint. In particular, the announcement $\solvable'$ that there are at least two hats of every colour is trivialized because there are fewer colours $(2)$ than agents $(3)$. It is the restriction to $\{000,111\}$. It is not a fixpoint, e.g.\  Figure~\ref{fig.muddy}.i and \ref{fig.muddy}.ii both also contain $111$. It is not even colour-blind: there are other restrictions that are also invariant for any permutation of colours over hats. For example, any other opposite states such as $\{011,100\}$. 

Fourth, did you, reader, ever realize that there are really two different ways to solve the muddy children problem? Take Figure~\ref{fig.muddy}.ii where in the three-state restriction $\{000,100,110\}$ after two rounds of the $\bell$ finally all children have stepped forward, whereas in the five-state complement restriction this takes three rounds instead. Now compare this to Figure~\ref{fig.muddy}.iii where the updates are a function of the announcement that nobody knows whether they are muddy (or its negation), later formalized as $\bell_\emptyset$. The updates are different! How come?


%
%
%
%
%
\begin{figure}
\[\begin{array}{lllllllll}
\raisebox{47pt}{$(i)$} &
\scalebox{.45}{
\framebox{
\begin{tikzpicture}[z=0.35cm]
\node (000) at (0,0,0) {$000$};
\node (001) at (0,0,3) {$001$};
\node (010) at (0,3,0) {$010$};
\node (011) at (0,3,3) {$011$};
\node (100) at (3,0,0) {$100$};
\node (101) at (3,0,3) {$101$};
\node (110) at (3,3,0) {$110$};
\node (111) at (3,3,3) {$111$};
\draw (000) -- node[fill=white,inner sep=1pt] {$a$} (100);
\draw (001) -- node[fill=white,inner sep=1pt] {$a$} (101);
\draw (011) -- node[fill=white,inner sep=1pt] {$a$} (111);
\draw (000) -- node[fill=white,inner sep=1pt] {$b$} (010);
\draw (001) -- node[fill=white,inner sep=1pt] {$b$} (011);
\draw (101) -- node[fill=white,inner sep=1pt] {$b$} (111);
\draw (000) -- node[fill=white,inner sep=1pt] {$c$} (001);
\draw (010) -- node[fill=white,inner sep=1pt] {$c$} (011);
\draw (100) -- node[fill=white,inner sep=1pt] {$c$} (101);
\draw (110) -- node[fill=white,inner sep=1pt] {$c$} (111);
\draw[draw=white,double=black,very thick] (010) -- node[fill=white,inner sep=1pt] {$a$} (110);
\draw[draw=white,double=black,very thick] (100) -- node[fill=white,inner sep=1pt] {$b$} (110);
\end{tikzpicture}
}
} 
& \raisebox{25pt}{$\stackrel {000?} \Imp$} &
\scalebox{.45}{
\framebox{
\begin{tikzpicture}[z=0.35cm]
\node (000) at (0,0,0) {$000$};
\node (001) at (0,0,3) {$001$};
\node (010) at (0,3,0) {$010$};
\node (011) at (0,3,3) {$011$};
\node (100) at (3,0,0) {$100$};
\node (101) at (3,0,3) {$101$};
\node (110) at (3,3,0) {$110$};
\node (111) at (3,3,3) {$111$};
%
\draw (001) -- node[fill=white,inner sep=1pt] {$a$} (101);
\draw (011) -- node[fill=white,inner sep=1pt] {$a$} (111);
%
\draw (001) -- node[fill=white,inner sep=1pt] {$b$} (011);
\draw (101) -- node[fill=white,inner sep=1pt] {$b$} (111);
%
\draw (010) -- node[fill=white,inner sep=1pt] {$c$} (011);
\draw (100) -- node[fill=white,inner sep=1pt] {$c$} (101);
\draw (110) -- node[fill=white,inner sep=1pt] {$c$} (111);
\draw[draw=white,double=black,very thick] (010) -- node[fill=white,inner sep=1pt] {$a$} (110);
\draw[draw=white,double=black,very thick] (100) -- node[fill=white,inner sep=1pt] {$b$} (110);
\end{tikzpicture} 
}
}
& \raisebox{25pt}{$\stackrel \bell \Imp$} &
\scalebox{.45}{
\framebox{
\begin{tikzpicture}[z=0.35cm]
\node (000) at (0,0,0) {$000$};
\node (001) at (0,0,3) {$001$};
\node (010) at (0,3,0) {$010$};
\node (011) at (0,3,3) {$011$};
\node (100) at (3,0,0) {$100$};
\node (101) at (3,0,3) {$101$};
\node (110) at (3,3,0) {$110$};
\node (111) at (3,3,3) {$111$};
%
\draw (011) -- node[fill=white,inner sep=1pt] {$a$} (111);
%
\draw (101) -- node[fill=white,inner sep=1pt] {$b$} (111);
%
\draw (110) -- node[fill=white,inner sep=1pt] {$c$} (111);
%
%
\end{tikzpicture} 
}
}
& \raisebox{25pt}{$\stackrel \bell \Imp$} &
\scalebox{.45}{
\framebox{
\begin{tikzpicture}[z=0.35cm]
\node (000) at (0,0,0) {$000$};
\node (001) at (0,0,3) {$001$};
\node (010) at (0,3,0) {$010$};
\node (011) at (0,3,3) {$011$};
\node (100) at (3,0,0) {$100$};
\node (101) at (3,0,3) {$101$};
\node (110) at (3,3,0) {$110$};
\node (111) at (3,3,3) {$111$};
%
%
%
%
%
\end{tikzpicture} 
}
}
& \raisebox{25pt}{$\stackrel \bell =$}
\end{array}\]
\[\begin{array}{lllllllll}
\raisebox{47pt}{$(ii)$} &
\scalebox{.45}{
\framebox{
\begin{tikzpicture}[z=0.35cm]
\node (000) at (0,0,0) {$000$};
\node (001) at (0,0,3) {$001$};
\node (010) at (0,3,0) {$010$};
\node (011) at (0,3,3) {$011$};
\node (100) at (3,0,0) {$100$};
\node (101) at (3,0,3) {$101$};
\node (110) at (3,3,0) {$110$};
\node (111) at (3,3,3) {$111$};
\draw (000) -- node[fill=white,inner sep=1pt] {$a$} (100);
\draw (001) -- node[fill=white,inner sep=1pt] {$a$} (101);
\draw (011) -- node[fill=white,inner sep=1pt] {$a$} (111);
\draw (000) -- node[fill=white,inner sep=1pt] {$b$} (010);
\draw (001) -- node[fill=white,inner sep=1pt] {$b$} (011);
\draw (101) -- node[fill=white,inner sep=1pt] {$b$} (111);
\draw (000) -- node[fill=white,inner sep=1pt] {$c$} (001);
\draw (010) -- node[fill=white,inner sep=1pt] {$c$} (011);
\draw (100) -- node[fill=white,inner sep=1pt] {$c$} (101);
\draw (110) -- node[fill=white,inner sep=1pt] {$c$} (111);
\draw[draw=white,double=black,very thick] (010) -- node[fill=white,inner sep=1pt] {$a$} (110);
\draw[draw=white,double=black,very thick] (100) -- node[fill=white,inner sep=1pt] {$b$} (110);
\end{tikzpicture}
}
} 
& \raisebox{25pt}{$\stackrel {3/5?} \Imp$} &
\scalebox{.45}{
\framebox{
\begin{tikzpicture}[z=0.35cm]
\node (000) at (0,0,0) {$000$};
\node (001) at (0,0,3) {$001$};
\node (010) at (0,3,0) {$010$};
\node (011) at (0,3,3) {$011$};
\node (100) at (3,0,0) {$100$};
\node (101) at (3,0,3) {$101$};
\node (110) at (3,3,0) {$110$};
\node (111) at (3,3,3) {$111$};
\draw (000) -- node[fill=white,inner sep=1pt] {$a$} (100);
\draw (001) -- node[fill=white,inner sep=1pt] {$a$} (101);
\draw (011) -- node[fill=white,inner sep=1pt] {$a$} (111);
%
\draw (001) -- node[fill=white,inner sep=1pt] {$b$} (011);
\draw (101) -- node[fill=white,inner sep=1pt] {$b$} (111);
%
\draw (010) -- node[fill=white,inner sep=1pt] {$c$} (011);
%
\draw[draw=white,double=black,very thick] (100) -- node[fill=white,inner sep=1pt] {$b$} (110);
\end{tikzpicture} 
}
}
& \raisebox{25pt}{$\stackrel \bell \Imp$} &
\scalebox{.45}{
\framebox{
\begin{tikzpicture}[z=0.35cm]
\node (000) at (0,0,0) {$000$};
\node (001) at (0,0,3) {$001$};
\node (010) at (0,3,0) {$010$};
\node (011) at (0,3,3) {$011$};
\node (100) at (3,0,0) {$100$};
\node (101) at (3,0,3) {$101$};
\node (110) at (3,3,0) {$110$};
\node (111) at (3,3,3) {$111$};
%
\draw (001) -- node[fill=white,inner sep=1pt] {$a$} (101);
%
\draw (101) -- node[fill=white,inner sep=1pt] {$b$} (111);
%
%
%
\end{tikzpicture} 
}
}
& \raisebox{25pt}{$\stackrel \bell \Imp$} &
\scalebox{.45}{
\framebox{
\begin{tikzpicture}[z=0.35cm]
\node (000) at (0,0,0) {$000$};
\node (001) at (0,0,3) {$001$};
\node (010) at (0,3,0) {$010$};
\node (011) at (0,3,3) {$011$};
\node (100) at (3,0,0) {$100$};
\node (101) at (3,0,3) {$101$};
\node (110) at (3,3,0) {$110$};
\node (111) at (3,3,3) {$111$};
%
%
%
%
%
\end{tikzpicture} 
}
}
& \raisebox{25pt}{$\stackrel \bell =$}
\end{array}\]
%
%
%
%
\[\begin{array}{lllllllll}
\raisebox{47pt}{$(iii)$} &
\scalebox{.45}{
\framebox{
\begin{tikzpicture}[z=0.35cm]
\node (000) at (0,0,0) {$000$};
\node (001) at (0,0,3) {$001$};
\node (010) at (0,3,0) {$010$};
\node (011) at (0,3,3) {$011$};
\node (100) at (3,0,0) {$100$};
\node (101) at (3,0,3) {$101$};
\node (110) at (3,3,0) {$110$};
\node (111) at (3,3,3) {$111$};
\draw (000) -- node[fill=white,inner sep=1pt] {$a$} (100);
\draw (001) -- node[fill=white,inner sep=1pt] {$a$} (101);
\draw (011) -- node[fill=white,inner sep=1pt] {$a$} (111);
\draw (000) -- node[fill=white,inner sep=1pt] {$b$} (010);
\draw (001) -- node[fill=white,inner sep=1pt] {$b$} (011);
\draw (101) -- node[fill=white,inner sep=1pt] {$b$} (111);
\draw (000) -- node[fill=white,inner sep=1pt] {$c$} (001);
\draw (010) -- node[fill=white,inner sep=1pt] {$c$} (011);
\draw (100) -- node[fill=white,inner sep=1pt] {$c$} (101);
\draw (110) -- node[fill=white,inner sep=1pt] {$c$} (111);
\draw[draw=white,double=black,very thick] (010) -- node[fill=white,inner sep=1pt] {$a$} (110);
\draw[draw=white,double=black,very thick] (100) -- node[fill=white,inner sep=1pt] {$b$} (110);
\end{tikzpicture}
}
} 
& \raisebox{25pt}{$\stackrel {3/5?} \Imp$} &
\scalebox{.45}{
\framebox{
\begin{tikzpicture}[z=0.35cm]
\node (000) at (0,0,0) {$000$};
\node (001) at (0,0,3) {$001$};
\node (010) at (0,3,0) {$010$};
\node (011) at (0,3,3) {$011$};
\node (100) at (3,0,0) {$100$};
\node (101) at (3,0,3) {$101$};
\node (110) at (3,3,0) {$110$};
\node (111) at (3,3,3) {$111$};
\draw (000) -- node[fill=white,inner sep=1pt] {$a$} (100);
\draw (001) -- node[fill=white,inner sep=1pt] {$a$} (101);
\draw (011) -- node[fill=white,inner sep=1pt] {$a$} (111);
%
\draw (001) -- node[fill=white,inner sep=1pt] {$b$} (011);
\draw (101) -- node[fill=white,inner sep=1pt] {$b$} (111);
%
\draw (010) -- node[fill=white,inner sep=1pt] {$c$} (011);
%
\draw[draw=white,double=black,very thick] (100) -- node[fill=white,inner sep=1pt] {$b$} (110);
\end{tikzpicture} 
}
}
& \raisebox{25pt}{$\stackrel {\bell_\emptyset?} \Imp$} &
\scalebox{.45}{
\framebox{
\begin{tikzpicture}[z=0.35cm]
\node (000) at (0,0,0) {$000$};
\node (001) at (0,0,3) {$001$};
\node (010) at (0,3,0) {$010$};
\node (011) at (0,3,3) {$011$};
\node (100) at (3,0,0) {$100$};
\node (101) at (3,0,3) {$101$};
\node (110) at (3,3,0) {$110$};
\node (111) at (3,3,3) {$111$};
\draw (000) -- node[fill=white,inner sep=1pt] {$a$} (100);
\draw (001) -- node[fill=white,inner sep=1pt] {$a$} (101);
%
\draw (101) -- node[fill=white,inner sep=1pt] {$b$} (111);
%
%
\draw[draw=white,double=black,very thick] (100) -- node[fill=white,inner sep=1pt] {$b$} (110);
\end{tikzpicture} 
}
}
& \raisebox{25pt}{$\stackrel {\bell_\emptyset?} =$} & \qquad \raisebox{25pt}{\LARGE$\times$} \qquad\qquad & \qquad\qquad\qquad
\end{array}\]
\caption{Different ways to solve three muddy children}
\label{fig.muddy}
\end{figure}
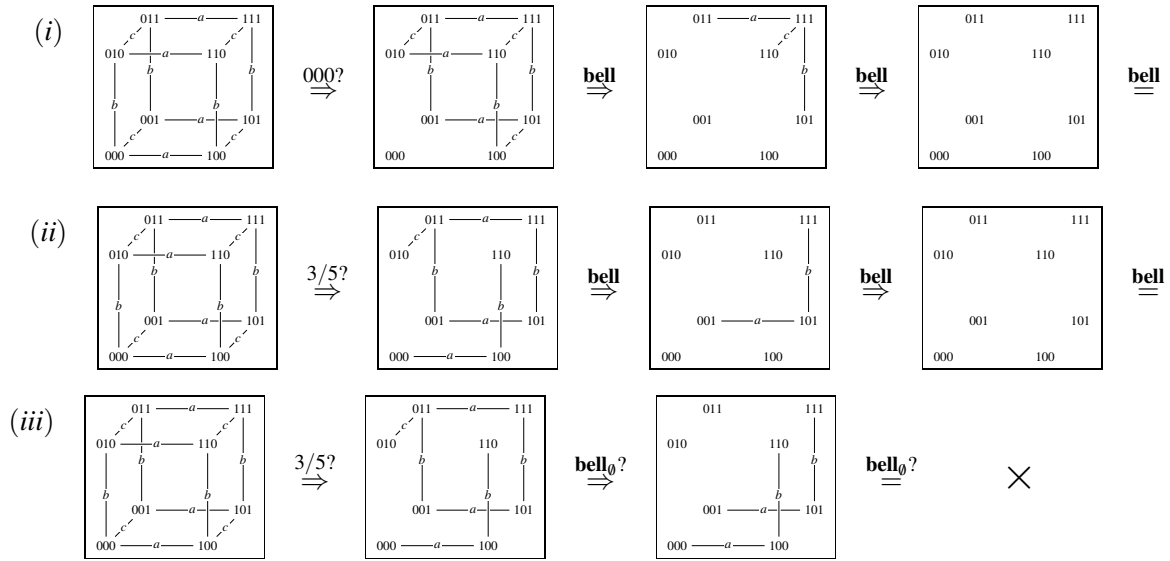

In Figure~\ref{fig.muddy}.ii the updates represent that the children who know whether they are muddy step forward. The update $\bell$ is therefore non-deterministic choice between eight different announcements of which $\bell_\emptyset$, nobody knows whether they are muddy, is only one. In Figure~\ref{fig.muddy}.iii the update $\bell_\emptyset?$ is non-deterministic choice between $\bell_\emptyset$ and its negation: somebody knows whether they are muddy. These updates are different. For example, in the three-state restriction, in $000$ only children $b$ and $c$ know whether they are muddy, in $100$ only $c$ knows that, and in $110$ only $a$ and $c$ know that. Therefore at the first $\bell$ child $a$ learns the difference between $000$ and $100$ (depending on whether $b$ steps forward) and child $b$ learns the difference between $100$ and $110$ (depending on whether child $a$ steps forward). The links are cut. Similarly for the five-state restriction of the model at the second $\bell$. Either way, eventually all children get to know whether they are muddy, and the problem is solved. But not with $\bell_\emptyset$ updates. Some links are never cut.

Now do the same in Figure~\ref{fig.muddy}.i. There, it does not matter if the updates are with $\bell_\emptyset?$ or with $\bell$.

\weg{
Something similar is the case in the two-state and six-state restrictions of (iv), although it then gets stuck, and $a$ never learns whether it is muddy.
}

\weg{
The proposition $\bell$ will be formalized as non-deterministic choice between zero / one / two / three children knowing whether they are muddy. Where again in the final stage they (should always) all know whether they are muddy. These are eight different mutually exclusive announcements of which one will always succeed. 
}

After all these years teaching logic puzzles to audiences of school children and students, it came somewhat as a surprise to us that there are really different formalizations of muddy children, and that one is more faithful to the actions of individual children stepping forward than the other one.

\paragraph{Towards M\"utzen} 
Let us now consider three children and four `colours' $0,1,2,3$, so that even if all three children have a unique colour, a child is uncertain about its own colour. The uncertainty of the agents permits a similar cube model representation consisting of $64$ states and with subdivided edges such as $000$---$a$---$100$---$a$---$200$---$a$---$300$; we will not draw it. 
The restriction on this model due to \solvable\ consists of states \[ \{000,111,222,333\} \] 
that are all singleton equivalence classes in which all three children will step forward immediately. Which is not a fixpoint. Consider the restriction consisting of \[ \{000,111,222,333\} \union \{100,120\} \] 
that contains a connected part $120$---$b$---$100$---$a$---$000$. At the announcement of $\bell$, in $111$, $222$, $333$ all children know their colour and will step forward. In $000$, $b$ and $c$ know that they are $0$, in $100$ only $c$ knows it is $0$, and in $120$ $a$ knows that it is $1$ and $c$ knows that it is $0$. And in those cases all those would then step forward. Therefore, $a$ learns the difference between $000$ (wherein $b$ knows its value) and $100$ (wherein $b$ does not know), and will step forward at the second $\bell$. Similarly, $b$ then learns the difference between $100$ and $120$. But, of course, the above restriction to $\{000,111,222,333\} \union \{100,120\}$ 
is not invariant for permutations of colours; it is different from, for example, the restriction to \[ \{000,111,222,333\} \union \{022,032\} \]
resulting from permutation $\iota(0)=2$, $\iota(1)=0$, $\iota(2)=3$, $\iota(3)=2$. 

We are not there yet, because now consider another diagonal of the 64 state cube: \[ \{033,122,211,300\} \] 
If we swap $1$ and $3$ we obtain \[ \{011,322,233,100\} \] 
which is a different restriction. Although some permutations, like swapping $1$ and $2$, leave it unchanged. Restriction $\{000,111,222,333\}$ is the only one that is invariant for \emph{any} permutation of colours, maximal, and such that all children always get to know their colour. In this case, in a boring way, because they immediately know. The complement of $\{000,111,222,333\}$ is also invariant for any permutation of colours and maximal. But the children now never get to know their colour (unlike in Figure~\ref{fig.muddy}.iii): any state $xyz$ is indistinguishable for $a$ from $vyz$ where $v \neq x,y,z$. Therefore, $\{000,111,222,333\}$ is the only restriction satisfying $\solvable$ and $\solvable'$. 

All this, for three children and four colours, is just as in M\"utzen, only there it takes many rounds. 

\weg{
\begin{figure}
\[\begin{array}{ll}
(iv) & \framebox{000 \qquad 111 \qquad 222 \qquad 333} \\ \ \\
(v) & \framebox{\text{$120$---$b$---$100$---$a$---$000$}  \qquad 111 \qquad 222 \qquad 333} \\ \ \\
(vi) & \framebox{000  \qquad 111 \qquad \text{$032$---$b$---$022$---$a$---$222$} \qquad 333} \\ \ \\
(vii) &  \framebox{033 \qquad 122 \qquad 211 \qquad 300}  \\ \ \\
(viii) & \framebox{011 \qquad 322 \qquad 233 \qquad 100} 
\end{array}\]
\caption{Different ways to solve three hatted children given four colours $0,1,2,3$. Initial model of $64$ states not shown. Equivalences classes such as $000$---$a$---$100$ given two colours $0,1$ now become equivalence classes $000$---$a$---$100$---$a$---$200$---$a$---$300$ (assume transitivity) given four colours $0,1,2,3$.}
\label{fig.colour}
\end{figure}
}

\section{M\"utzen in epistemic logic} \label{sec.formalization}

\subsection{Introduction}

We first formalize M\"utzen in standard public announcement logic where we have simplified the \solvable\ announcement to the $\solvable'$ announcement stating that no gnome wears a hat of a unique colour. We formalize it in an extended language including fixpoints. First, we survey the relevant literature.

\paragraph{Public announcement logic}
Public announcement logic \cite{plaza:1989} is a modal logic that is an extension of multi-agent epistemic logic (the logic of knowledge) \cite{hintikka:1962} with a modality expressing the consequences of all agents synchronously observing (or hearing, or receiving) new information that is considered reliable. It is an example of what are known as dynamic epistemic logics \cite{hvdetal.del:2007,hvdetal.handbook:2015}. It contains modalities $K_a \phi$ for `agent $a$ knows that $\phi$' and modalities $[\psi]\phi$ for `after truthful public announcement of $\psi$, $\phi$ (is true)'. Works such as \cite{hvdetal.del:2007,hvdetal.handbook:2015} and many others are suitable background reading.

\paragraph{Fixpoints} 
We can extend the language of public announcement logic with \emph{fixpoints}. Fixpoint semantics have been investigated for epistemic modal logics in \cite{HalpernM86,parikh1992,jfaketal.mu:2008,BaltagBF22,yanjunetal:2025} (and in an epistemically motivated but not strictly epistemic setting in \cite{bozzellietal.inf:2014,xing:2024}), where they are a more or less straightforward adaptation of those for the modal $\mu$-calculus \cite{Kozen82}. We add another inductive construct $\nu p. \phi$ to the language. In fixpoint modal logic all occurrences of $p$ in $\phi$ must be \emph{positive}: bound by (zero or) an even number of negations. In fixpoint public announcement logic it is not that simple, as we will see, and our concern is only with existence of fixpoints.

\paragraph{Arbitrary iteration of announcements} Arbitrary iteration of announcements is uncommon fare in dynamic epistemic logics, as the (satisfiability problem of the) resulting logic is undecidable and axiomatization are unknown \cite{milleretal:2005}. Arbitrary iteration will be added as yet another inductive construct $[\phi]^*\psi$ to the logical language. 

\paragraph{Public assignments} Assignments $\sigma$ are functions that, for a finite subset of all atoms (for technical reasons), map each atom $p$ in that set to a formula $\sigma(p)$ in the logical language. For $\sigma(p)=\phi$ we also write $p := \phi$. They come with yet another inductive construct $[\sigma]\phi$. See e.g.\ \cite{hvdetal.world:2008,hvdetal.aamas:2005,DitmarschHL12}. It has been frequently used in works bridging the gap between epistemic logic and applications in artificial intelligence, such as epistemic planning \cite{hvdetal.sitcalc:2011,BELLE2022103842}. (Like public assignments, fixpoints also change the value of variables in a given model, namely that of the fixpoint variable $p$ in $\nu p. \phi$.)

\subsection{Public announcement logic with fixpoints and assignments}

We succinctly define the language, structures, and semantics.

\paragraph{Language}
Given disjoint countable sets of agents and atoms (the set of agents is often required to be finite), the language of public announcement logic is inductively defined as that of the \emph{formulas} 

\[ \phi ::= p \mid \neg \phi \mid (\phi\et\psi) \mid K_a\phi \mid [\psi]\phi \mid \nu p.\phi \mid [\psi]^*\phi \mid [\sigma]\phi \] 
where $p$ is an atomic proposition and $a$ is an agent, where in clause $\nu p.\phi$ atom $p$ may have to satisfy a requirement (discussed later), and where $\sigma$ is an assignment. Any other propositional connnectives are defined by notational abbreviation, as well as the dual modalities $\M_a \phi := \neg K_a \neg \phi$, $\dia{\phi}\psi := \neg [\phi] \neg \psi$, $\dia{\phi}^*\psi := \neg [\phi]^*\neg \psi$, $\mu x.\phi := \neg \nu x. \neg \phi[x/\neg x]$ (where $\phi[x/\neg x]$ denotes uniform substitution of all \emph{free} ---not bound by another $\nu x$ occurring in $\phi$--- occurrences of $x$ in $\phi$ by $\neg x$), and non-deterministic choice between announcements $[\phi\union\psi]\chi := [\phi]\chi\et[\psi]\chi$ (and similarly for the diamond version, $\dia{\phi\union\psi}\chi := \dia{\phi}\chi\vel\dia{\psi}\chi$, which more intuitively reflects choice). No dual is needed for assignments as they are self-dual. We omit parentheses ( and ) unless this creates ambiguity. 
We also write $\phi?$ for $\phi\union\neg\phi$. 

\paragraph{Structures} The language is interpreted in epistemic models $(W,\sim,V)$ where $W$ is a set of \emph{worlds} (or \emph{states}), where $\sim$ consists of binary \emph{indistinguishability relations} $\sim_a$ between worlds for each agent $a$, that are equivalence relations, and where \emph{valuation} $V$ maps each atom to a subset of the set of worlds $W$ (these are the worlds where the atom is true). 

\paragraph{Semantics} We then define the semantics as follows, by way of a \emph{satisfiability relation} $\models$ between pairs $(M,w)$ and formulas $\phi$, where $M = (W,\sim,V)$ is an epistemic model, world $w \in W$, and where $\I{\phi}_M$ abbreviates $\{ w \in W \mid M,w \models \phi \}$. 
\[ \begin{array}{lll}
M,w \models p & \text{iff} & w \in V(p) \\
M,w \models \neg \phi & \text{iff} & M,w \not\models \phi \\
M,w \models \phi\et\psi & \text{iff} & M,w \models \phi \text{ and } M,w \models \psi \\
M,w \models K_a \phi & \text{iff} & M,v \models \phi \text{ for all $v$ such that } w \sim_a v \\
M,w \models [\psi] \phi & \text{iff} & M,w \models \psi \text{ implies } M|\psi,v \models \phi \\
M,w \models [\psi]^*\phi & \text{iff} & M,w \models [\psi]^n\phi \text{ for all } n \in \Naturals \\
M,w \models \nu p. \phi & \text{iff} & M^{p{\mapsto}U},w \models \phi \text{ where } U = \Union \{X \subseteq W \mid X \subseteq \I{\phi}_{M^{p{\mapsto}X}} \} \\ 
M,w \models [\sigma] \phi & \text{iff} & M^\sigma,w \models \phi \text{ where } M^\sigma = (W,\sim,V^\sigma) \text{ with } V^\sigma(p) = \I{\psi} \text{ for all } \sigma(p) = \psi
\end{array} \]
Here, $M|\psi$ is the \emph{restriction} (or \emph{update}) of the model $M$ to the worlds in $\I{\psi}_M \subseteq W$, defined as $M|\psi := \{W',\sim',V'\}$ where $W' = \I{\psi}_M$, ${\sim'_a} := {\sim_a \inter (W' \times W'}$ for each agent $a$, and $V'(p) := V(p) \inter W'$ for each atom $p$. Then, for arbitrary $X \subseteq W$, $M^{p{\mapsto}X} := (W,\sim,V')$ where $V'(p) = X$ and $V'(q)=V(q)$ for $q \neq p$. Furthermore, $[\psi]^n\phi$ represents iteration of the announcement $\psi$ ($[\psi]^0\phi := \phi$, $[\psi]^{n+1}\phi := [\psi][\psi]^n\phi$ and similarly for $\dia{\psi}^n\phi$). A formula is \emph{valid} if it is true in all worlds of all models. 

We can also directly give the semantics of least fixpoint namely as:
\[ \begin{array}{lll}
M,w \models \mu p. \phi & \text{iff} & M^{p{\mapsto}U},w \models \phi \text{ where } U = \Inter \{ X \subseteq W \mid \I{\phi}_{M^{p{\mapsto}X}} \subseteq X \} \\ 
\end{array} \]

The main thing about public announcement logic is that the meaning of $[\phi]\psi$ (if $\phi$ is true and announced then $\psi$ is true) is different from the meaning of $\phi \imp \psi$ (if $\phi$ is true then $\psi$ is true), because in the former case $\psi$ is interpreted in a different, updated, model. For example, $p \imp K_a p$ is invalid (not all propositions are known!) whereas $[p]K_a p$ is valid (if we make $p$ public, then $a$ knows it). The puzzling feature of the logic is that a true formula may be false after its announcement, as in the muddy children puzzle (I know whether I am muddy / I don't know whether I am muddy), and the most typical example of that is the so-called Moore-sentence \cite{moore:1942} $p \et \neg K_a p$ that is false after its announcement. It is easy to see that $[p \et \neg K_a p] \neg (p \et \neg K_a p)$ is even valid.

\subsection{Positive and negative results about fixpoints} \label{posneg}

For the fixpoint semantics we did not specify the requirement for them to exist, such as occurring positively. The issues combining announcements and fixpoints are well-known  \cite{jfaketal.mu:2008,BaltagBF22,yanjunetal:2025}. We review some of the issues and have some minor novel results.

\paragraph{What is positive?} 
How do we determine whether a fixpoint variable is `positive' in a public announcement formula? Consider again a formula of shape $[\phi]\psi$. When $\psi$ is an atom $p$, $[\phi]p$ is (always) equivalent to $\phi \imp p$, as atoms do not change their truth value in an updated model. Now consider $[\neg q]p$. This formula $[\neg q]p$ is equivalent to $\neg q \imp p$, that is, $\neg\neg q \vel p$ (or $q \vel p$), wherein the occurrence of $q$ is positive. We therefore should call the occurrence of $q$ in $[\neg q]p$ positive because its occurrence in the formula $\neg q$ of the public announcement is \emph{negative} (where negative is `not positive'). So that, for example, the greatest fixpoint $\nu q. [\neg q] p$ exists.

But we cannot simply define that $p$ is positive in $[\phi]\psi$ iff it is negative in $\phi$ and positive in $\psi$ (and that $p$ is positive in $\neg\phi$ iff it is negative in $\phi$), from which we obtain that $p$ is positive in $\dia{\phi}\psi$ iff is is positive in $\phi$ and in $\psi$. Consider $\dia{q} \M_a p$ and $\dia{q} K_a p$. Using the definitions by abbreviation and the reduction rules for public announcement \cite{hvdetal.del:2007}, the former can be rewritten to: $\dia{q} \M_a p$, iff $q \et \M_a \dia{q} p$, iff $q \et \M_a (q \et p)$. Both occurrences of $q$ are positive. No problem. On the other hand, we rewrite the latter to: $\dia{q} K_a p$, iff $\neg [q] \neg K_a p$, iff $\neg (q \imp \neg [q] K_a p)$, iff $\neg (q \imp \neg (q \imp K_a [q] p))$, iff $\neg (q \imp \neg (q \imp K_a (q \imp p)))$, which with a bit of propositional work is equivalent to $q \et K_a (\neg q \vel p)$. Problem. The second occurrence of $q$ is negative. This means that some smart strategy to expand the negation normal form 
$\phi ::= p \mid \neg p \mid \phi \vel \psi \mid \phi \et \psi \mid \M_a \phi \mid K_a \phi$
with clauses $\dia{\psi}\phi$ and $[\psi]\phi$ for announcement 
does not lead towards syntactic characterization of positivity of designated fixpoint variables in public announcement formulas.  Note that the problematic $\dia{q}K_ap$ above is in this fragment.

Still, the following holds (where $\phi(q)$ is a formula with possible occurrences of atom $q$).  
\begin{proposition} \label{prop.posss}
If a public announcement formula $\phi(q)$ is equivalent to a multi-agent epistemic formula wherein $q$ occurs positively, then the fixpoint $\nu q. \phi(q)$ exists.
\end{proposition}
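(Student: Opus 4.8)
The plan is to reduce the existence of the fixpoint to monotonicity of the associated semantic operator, and then to transfer that monotonicity across the equivalence. Fix an arbitrary epistemic model $M=(W,\sim,V)$ and consider the operator $F_\phi \colon \power(W) \to \power(W)$ given by $F_\phi(X) = \I{\phi}_{M^{q{\mapsto}X}}$. By the semantic clause for $\nu q.\phi$, the set $U$ selected there is exactly $\Union\{X \subseteq W \mid X \subseteq F_\phi(X)\}$, the union of all post-fixpoints of $F_\phi$. To say that \emph{the fixpoint exists} is to say that this $U$ is in fact a fixpoint, i.e.\ $F_\phi(U)=U$. By the Knaster--Tarski theorem on the complete lattice $(\power(W),\subseteq)$, this holds as soon as $F_\phi$ is monotone (and then $U$ is moreover the greatest fixpoint). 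So it suffices to prove that $F_\phi$ is monotone, for every $M$.

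Next I would recall the standard monotonicity lemma for multi-agent epistemic logic, the modal fragment of the $\mu$-calculus \cite{Kozen82}: if $q$ occurs only positively in a multi-agent epistemic formula $\psi(q)$, then $F_\psi(X) = \I{\psi}_{M^{q{\mapsto}X}}$ is monotone, while if $q$ occurs only negatively then $F_\psi$ is antitone. This is a routine induction on the structure of $\psi$: atoms other than $q$ give constant operators, the displayed occurrences of $q$ give the identity, conjunction and the $K_a$ modality preserve monotonicity (since intersection and the box operation are monotone for $\subseteq$), and each negation interchanges monotone with antitone, so that an even number of enclosing negations, i.e.\ positivity, yields monotonicity. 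I would apply this lemma to the epistemic formula $\psi(q)$ that, by hypothesis, is equivalent to $\phi(q)$ and contains $q$ positively.

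The crux is to transfer monotonicity from $F_\psi$ to $F_\phi$. Here I use that the hypothesised equivalence is a \emph{validity}: $\phi(q)$ and $\psi(q)$ hold in exactly the same worlds of every model. In particular, for the model $M$ fixed above and for every $X \subseteq W$, the altered structure $M^{q{\mapsto}X}$ is itself an epistemic model, so $\I{\phi}_{M^{q{\mapsto}X}} = \I{\psi}_{M^{q{\mapsto}X}}$. Thus $F_\phi = F_\psi$ as maps on $\power(W)$. Since $F_\psi$ is monotone by the previous step, so is $F_\phi$, and Knaster--Tarski then gives $F_\phi(U)=U$. As $M$ was arbitrary, $\nu q.\phi(q)$ denotes a genuine greatest fixpoint in every model, which is exactly the claim.

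The only real subtlety, and the step I would be most careful about, is the observation that the equivalence must be read as holding under every reinterpretation of $q$, i.e.\ that it survives replacing $V(q)$ by an arbitrary $X$. This is immediate because validity quantifies over \emph{all} models and $q$ is an atom, so each $M^{q{\mapsto}X}$ is among them; but it is precisely the point that would fail for a merely model-local or world-local equivalence, where $F_\phi$ and $F_\psi$ need not agree off the original valuation. Everything else is the standard positivity-implies-monotonicity argument together with Knaster--Tarski, so no genuinely new calculation is needed.
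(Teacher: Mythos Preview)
Your argument is correct and is essentially the approach the paper has in mind: the paper declares the proof ``obvious'' by invoking the reduction of public announcement formulas to plain epistemic ones, and you have spelled out precisely the content behind that remark---positivity of $q$ in the equivalent epistemic formula $\psi$ gives monotonicity of $F_\psi$, the validity of $\phi\leftrightarrow\psi$ transfers this to $F_\phi$ by surviving arbitrary revaluations of $q$, and Knaster--Tarski finishes. Your explicit treatment of why the equivalence must hold in every $M^{q\mapsto X}$ is exactly the point the paper is gesturing at when it follows the proposition with the caveat that not \emph{every} equivalent epistemic formula need have $q$ positive.
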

The proof is obvious as any public announcement formula is equivalent to one without announcements \cite{plaza:1989}. Note that we do not say `\emph{any} equivalent epistemic formula'. For example, fixpoint $\nu q. [\neg q] p$ exists because given $[\neg q] p$, $q$ is positive in the equivalent $q \vel p$. But it is not positive in the also equivalent $(q \vel p) \et (q \vel \neg q)$.

\paragraph{The fixpoint positive fragment}
That $p$ is positive in a formula $[\phi]\psi$ when it is negative in $\phi$ is not an unknown phenomenon, as it can be shown that the (also known as) positive fragment $\phi ::= p \mid \neg p \mid (\phi \vel \psi) \mid (\phi\et\psi) \mid K_a\phi \mid [\neg\psi]\phi$ preserves truth after update \cite{hvdetal.synthese:2006,hvdetal.aimlbook:2005}. If we expand this fragment to the {\em fixpoint positive} fragment \[ \phi ::= q \mid p \mid \neg p \mid (\phi \vel \psi) \mid (\phi\et\psi) \mid K_a\phi \mid [\neg\psi]\phi \] where atoms $q$ are designated fixpoint variables and atoms $p$ are non-designated, then:
\begin{proposition}
Fixpoints $\nu q.\phi(q)$ exist for all formulas $\phi(q)$ in the fixpoint positive fragment. 
\end{proposition}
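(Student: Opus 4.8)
The plan is to secure existence of $\nu q.\phi(q)$ by showing that the operator it induces is monotone and then appealing to the Knaster--Tarski theorem. Fix a model $M=(W,\sim,V)$ and, for a fragment formula $\phi(q)$, consider $F_\phi:\powerset(W)\to\powerset(W)$ given by $F_\phi(X):=\I{\phi}_{M^{q\mapsto X}}$. The semantic clause sets $U=\Union\{X\subseteq W\mid X\subseteq F_\phi(X)\}$, and this $U$ is guaranteed to be the \emph{greatest} fixpoint of $F_\phi$ once $F_\phi$ is monotone; so it suffices to prove that $F_\phi$ is monotone for every formula $\phi$ of the fixpoint positive fragment.

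The monotonicity proof goes by induction on $\phi$, but the announcement clause $[\neg\psi]\chi$ forces me to carry along a second statement, so I would prove the following two claims by structural induction, in this order. First, \emph{preservation under submodels}: for every fragment formula $\phi$, every submodel $M'$ of $M$ (domain, relations and valuation all restricted) and every $w$ in the domain of $M'$, $M,w\models\phi$ implies $M',w\models\phi$. Second, \emph{monotonicity in $q$}: for every fragment formula $\phi$ and all $X\subseteq Y\subseteq W$ we have $\I{\phi}_{M^{q\mapsto X}}\subseteq\I{\phi}_{M^{q\mapsto Y}}$, i.e.\ $F_\phi$ is monotone. The first claim is essentially the known fact that the positive fragment is preserved under updates \cite{hvdetal.synthese:2006,hvdetal.aimlbook:2005}, here adapted to arbitrary submodels; I would reprove it by induction. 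Its only delicate cases are $K_a\chi$, where passing to a submodel can only delete $a$-successors so that a universally quantified $K_a\chi$ survives (and $\M_a$, which would not survive, never occurs positively in the fragment), and $[\neg\psi]\chi$, where the induction hypothesis for $\psi$ gives, contrapositively, $\I{\neg\psi}_{M'}\subseteq\I{\neg\psi}_M$, so that $M'|\neg\psi$ is a submodel of $M|\neg\psi$ and the hypothesis for $\chi$ applies.

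For the monotonicity claim, the base cases $q$, $p$, $\neg p$ and the Boolean and $K_a$ cases are immediate, since revaluing $q$ alters neither $\sim$ nor the valuation of any other atom. The crux is the clause $\phi=[\neg\psi]\chi$: when $X$ grows to $Y$ the restricted model $(M^{q\mapsto X})|\neg\psi$ changes in two coupled ways, because $q$ is positive in $\psi$, so its valuation grows while the domain $\I{\neg\psi}$ shrinks. I would decouple these through an intermediate model. Assuming $M^{q\mapsto Y},w\models\neg\psi$, monotonicity for $\psi$ transfers the antecedent to $X$, giving $(M^{q\mapsto X})|\neg\psi,\,w\models\chi$; restricting that model to the smaller domain $\I{\neg\psi}_{M^{q\mapsto Y}}$ while keeping the valuation $X$ yields a genuine submodel, so by preservation under submodels $\chi$ still holds at $w$; finally this model and $(M^{q\mapsto Y})|\neg\psi$ share domain and relations and differ only in the valuation of $q$, which has grown from $X$ to $Y$ on that domain, so one last use of monotonicity for $\chi$ delivers $(M^{q\mapsto Y})|\neg\psi,\,w\models\chi$, as required.

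Once $F_\phi$ is shown monotone on the complete lattice $(\powerset(W),\subseteq)$, Knaster--Tarski provides a greatest fixpoint equal to $\Union\{X\mid X\subseteq F_\phi(X)\}=U$, so $\nu q.\phi(q)$ is well defined in every model and the fixpoint exists. The main obstacle is exactly the announcement case of the monotonicity claim: a single induction hypothesis does not suffice because the submodel over which $\chi$ is evaluated itself depends on $X$, and it is precisely to break this dependency that the two-step argument through the intermediate model is needed, combining preservation under submodels with monotonicity. This is also why the preservation statement must be proved alongside, rather than after, the monotonicity statement.
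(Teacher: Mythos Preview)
Your argument is correct and takes a genuinely different route from the paper. The paper proceeds syntactically: it extends the induction of \cite{hvdetal.synthese:2006} to show that any $\phi(q)$ in the fixpoint positive fragment translates, via the reduction axioms for public announcement, into an equivalent announcement-free epistemic formula in which every occurrence of $q$ is bound by an even number of negations, and then invokes Proposition~\ref{prop.posss}. You instead work semantically, proving monotonicity of $X\mapsto\I{\phi}_{M^{q\mapsto X}}$ directly by a structural induction, with preservation under submodels as the auxiliary lemma needed to handle the $[\neg\psi]\chi$ case, and then conclude by Knaster--Tarski. The paper's approach is shorter (granted Proposition~\ref{prop.posss}) and produces an explicit syntactic witness of positivity; your approach is self-contained, does not rely on the reduction axioms, and would more readily extend to fragments containing constructs---such as the clause $\nu q.\phi$ conjectured just after the proposition---for which no reduction to the base language is available. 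One small expository fix: your monotonicity claim is phrased for a single fixed $M$, but in the announcement case you invoke the induction hypothesis for $\chi$ in a proper submodel of $M$; the statement should therefore quantify over all models from the outset, after which the rest of your argument goes through unchanged.
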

\begin{proof}
By induction on $\phi(q)$ in the  fixpoint positive fragment we extend the proof in \cite{hvdetal.synthese:2006} to show that there is an equivalent formula without public announcements wherein all occurrences of $q$ are bound by (zero or) an even number of negations. We then use Proposition~\ref{prop.posss}.
\end{proof}
We conjecture that fixpoints even exist for formulas in a further extended fixpoint positive fragment namely with an inductive clause $\nu q.\phi$. Unfortunately, such results do not get us anywhere when modelling knowledge puzzles such as Muddy Children and M\"utzen, as the puzzling aspect is a consequence of flipping truth values of announced formulas, that are therefore not in the fixpoint positive fragment.

\paragraph{Existence of fixpoints} We do not know of a syntactic characterization guaranteeing fixpoint existence in public announcement logic with fixpoints. When showing the existence of a fixpoint for M\"utzen we will therefore backtrack to directly showing their existence on the $\muetzen$ epistemic model, because, as we will see, not even monotonicity of our fixpoint formula is guaranteed for M\"utzen.

\paragraph{Assignments, announcements and fixpoints} 
A useful property of the fixpoint semantics is that for atoms $p$ and $q$, $\dia{p}\mu q. \phi(q) \eq p \et \mu q. \dia{p} \phi(q)$ is valid \cite[Fact 4]{jfaketal.mu:2008}. However, it does not hold for announcements of (all) arbitrary formulas $\psi$ instead of propositional variables $p$, and the interaction of assignments and announcements also cause additional complications \cite[Fact 5]{jfaketal.mu:2008}. Such complications are avoided in M\"utzen as our only assignments are permutations of colours over gnomes' hats.

\paragraph{Arbitrary iteration} 
Whether arbitrary iteration of announcements is definable as a fixpoint is  unclear, as in $[\phi]^*\psi$ the meaning of $\phi$ can be different at each subsequent iteration. Given the flipping of positive into negative it is also unclear whether for $[\phi]^*\psi$ one would rather have a least or a greatest fixpoint. We take the easy way out: to formalize M\"utzen we will not need arbitrary iteration, as there is an upper bound of the number of iterations of ignorance announcements in the inductive phase of solving the riddle. It is therefore sufficient to define $[\phi]^* \psi := \Et_{n=0}^{\max-1} [\phi]^n \psi$, denoted $[\phi]^{<\max} \psi$, where $\max$ is the number of colour distributions (the actual state will not be eliminated). Similarly, we define $\dia{\phi}^{<\max} \psi$ as $\Vel_{n=0}^{\max-1}\dia{\phi}^n$. In that case we will see that this is, after all, a fixpoint $\mu q. \psi \vel \dia{\phi} q$, where $q$ does not occur in $\phi$ or $\psi$. 

\subsection{Formalizing M\"utzen in public announcement logic}
 
We now can very efficiently formalize the M\"utzen riddle in public announcement logic for a set of agents that are $126$ gnomes and a set of atomic propositions $c_g$ for each gnome $g$ and each colour $c$. Our encoding needs formulas wherein all atoms occur, so that we need to restrict ourselves to a finite set of colours. For our purposes it is then enough to have just one more colour than there are gnomes, ensuring that any gnome remains uncertain about its own colour, even it if knew that all gnomes wear different colours. So we have $127$ colours. As worlds we take distributions $\delta: G \imp C$ of colours over gnomes. We let $\Delta := C^G$ denote the set of colour distributions. 
\begin{quote} {\em The epistemic model \muetzen\ consists of the domain $\Delta$ of all distributions of colours over gnomes, where two worlds $\delta$ and $\delta'$ are indistinguishable for gnome $g$ if it sees the same colours, so if for all $g' \neq g$, $\delta(g') = \delta'(g')$, and where $\delta \in V(c_g)$ if $\delta(g) = c$, that is, if $g$ wears a hat of colour $c$.} \hfill $(\muetzen)$ \end{quote}

The announcement $\solvable'$ that no gnome wears a unique colour is the formula \[ \solvable' \quad := \quad \Et_{g \in G} \Et_{c \in C} (c_g \imp \Vel_{h \in G}^{h\neq g} c_h) \]
and the announcement \bell\ of Santa ringing the Christmas bell is non-deterministic choice between mutually exclusive announcements 
$\bell := \Union_{L \subseteq G} \bell_L$ 
of formulas $\bell_L$ of $L$ gnomes \emph{L}eaving ({\bf or having left}) the room and $G{\setminus}L$ gnomes staying in the room. As leaving means knowing and staying means being ignorant, we can define: \[ \bell_L \quad := \quad \Et_{g \in L} \Et_{c \in C} (K_g c_g \vel K_g \neg c_g) \et \Et_{g \notin L} \neg\Et_{c \in C} (K_g c_g \vel K_g \neg c_g) \] The problem is solved when all know their colour, that is, when $\bell_G$ is true. To prevent Santa from having to ring the bell forever, it is preferable to distinguish rings where $L \neq G$ (there is remaining ignorance) from those where there is not, that we only want to feature once. We therefore define $\bell^{\notlast}$ as the non-deterministic announcement $\Union_{L \subset G} \bell_L$, and $\bell^\last$ as $\bell_G$. 
%
Given all this, the M\"utzen problem is to: 
\begin{quote} {\em Determine $n \in \Naturals^+$ such that $\muetzen,\delta \models \dia{\solvable'}\dia{\bell^\notlast}^{n-1}\bell^\last$.} \end{quote} At the $(n-1)$th ring of the bell, all who did not yet know the colour of their hat now learn this and will therefore leave the room at the $n$th ring of the bell. Determining $n$ requires the real colour distribution $\delta$ to become known to the problem solver. So, \emph{which} $n$ from $17$---$26$ is the correct answer? 

\subsection{Formalizing M\"utzen with a fixpoint}

We finally get to formalizing M\"utzen with a fixpoint, in the extended language. We keep the same set of propositional variables $c_g$ for gnome $g$ wearing a hat with colour $c$ but we also need some designated fixpoint variables for which we write $p,q,\dots$. 
First consider this (infelicitous) formalization of \solvable\ (determine $n \in \Naturals^+$ such that $\muetzen,\delta \models \dia{\solvable}\dia{\bell^\notlast}^{n-1}\bell^\last$): \vspace{-.3cm}
\[  \solvable \quad := \quad \nu p.\dia{p} \dia{\bell^\notlast}^* \bell^\last \quad \quad \hfill \LARGE{\text{$\times$}} \]
As expected, we need a greatest and not a least fixpoint, as under the scope of an announcement positive become negative. 
However, the announcement $\dia{p}$ binds a plethora of knowledge and ignorance formulas, so that, even if we take $\dia{\bell^\notlast}^*$ to be $\dia{\bell^\notlast}^{<\max}$, there will not be an equivalent epistemic formula obtained by eliminating announcements using the reduction axioms of public announcement logic wherein $p$ is also positive (see {\bf What is positive?} in the previous section). Could it be that maybe we do not need this to show existence? Indeed we don't, but let us postpone addressing this issue: this formalization of \solvable\ is infelicitous, because the restriction of $p$ to any singleton state in the model $\muetzen$, including those where gnomes wear unique colours, already results in all gnomes knowing their colour. 
We resolve this by enforcing that only colour distributions can be considered that are invariant for permutations of colours. For this, we have the assignments $\sigma$ and assignment modalities $[\sigma]$ at our disposal. Given a permutation $\iota: C\imp C$ of the set of colours, we denote by $\bm{\iota}$ the assignment such that for all colours $c$ and gnomes $g$, $\bm{\iota}(c_g) = \iota(c)_g$, and where $\bm{\iota}(p)=p$ for fixpoint variables $p$. Let $I$ be the set of all permutations of colours. We also extend the use of colour permutations $\iota$ in the obvious way from colours to colour distributions and sets of colour distributions. Given colour distributions $\delta,\delta'$, consider a relation $\delta\equiv_\iota\delta'$ if there is a permutation $\iota$ such that $\iota(\delta)=\delta'$. Relation $\equiv_\iota$ is an equivalence relation and induces a partition on the set of colour distributions $\Delta$. An $\equiv_\iota$ equivalence class is characterized by a partition of all gnomes into \emph{subsets of gnomes that have the same colour} (regardless of what that colour is). 

\paragraph{Formalization of solvable} This is our (felicitous) proposal for the formalization of \solvable. 
Here, given a colour distribution $\delta: G \imp C$ its \emph{description} is defined as $\bm{\delta} := \Et_{\delta(g)=c} c_g \et \Et_{\delta(g)\neq c} \neg c_g$. 
\[ \solvable \quad := \quad \nu p. \Et_{\delta\in\Delta} ((\bm{\delta}\imp p) \imp \Et_{\iota\in I} [\bm{\iota}] (\bm{\delta} \imp p)) \et \dia{p} \dia{\bell^\notlast}^* \bell^\last \]
Abbreviate the above formula \solvable\ as $\nu p. \phi(p)$. No matter how we wish to define positive, this formula is certainly not positive because of the part $\Et_{\delta\in\Delta} ((\bm{\delta}\imp p) \imp \Et_{\iota\in I} [\bm{\iota}] (\bm{\delta} \imp p))$, where $p$ is negative in the antecedent $\bm{\delta}\imp p$ of the implication. So we have to determine the existence of a fixpoint directly.

There is one more obstacle on that path. The function $f_{\phi(p)}$ mapping a subset $\Delta'\subseteq\Delta$ of colour distributions such that $\I{p}=\Delta'$ to a subset $\I{\phi(p)}=\Delta''\subseteq\Delta$ of colour distributions is {\bf not} monotonic. However, it is peculiar. In the following, let a permutation-invariant $\Delta'$ be a $\Delta'$ such that $\iota(\Delta')=\Delta'$; this includes $\Delta'=\emptyset$.
\begin{itemize}

\item For any permutation-invariant $\Delta'$ that does not contain colour distributions with unique colours we have that $f_{\phi(p)}(\Delta')=\Delta'$ so that $\Delta'\subseteq f_{\phi(p)}(\Delta')$.

\item For any permutation-invariant $\Delta'$ that contains colour distributions with unique colours we need to distinguish two cases. If $\Delta'$ only contains unique colour distributions we have that $f_{\phi(p)}(\Delta')=\emptyset$. This is because $\dia{\bell^\notlast}^* \bell^\last$ then always fails: if $\delta'\in\Delta'$ and $\delta'(g)=c$ then also $\delta''\in \Delta'$ that is as $\delta'$ except that $\delta''(g)=c'$ where $c'$ is another unique colour. So, gnome $g$ will never learn its colour. However, as also $f_{\phi(p)}(\emptyset)=\emptyset$ this does not violate monotonicity. The other case is when $\Delta'$ also contains colour distributions without unique colours. In that case $\dia{\bell^\notlast}^* \bell^\last$ still fails on the part of $\Delta'$ with unique colours, but it is not guaranteed to succeed on the part of $\Delta'$ without unique colours (for example, if for four gnomes it contains $1100$ as well as $2100$ then $a$ will now no longer find out whether it is muddy, but $b,c,d$ will --- whereas before, $a$ also would in case $1100$). Let $\Delta''$ be the part without unique colours. We then will have for some such $\Delta''$ that $\Delta''\subseteq\Delta'$ whereas $f_{\phi(p)}(\Delta'')\not\subseteq f_{\phi(p)}(\Delta')$. However, with respect to $\Delta'$, $\Delta''$ was then already maximal and all extensions of $\Delta''$ will also fail to satisfy the fixpoint formula, and all distributions in the part $\Delta'{\setminus}\Delta''$ will fail. See the next case for how to treat that.

\item For any $\Delta'$ that is not permutation invariant, we have that $f_{\phi(p)}(\Delta')=\emptyset$ because the left conjunct $\Et_{\delta\in\Delta} ((\bm{\delta}\imp p) \imp \Et_{\iota\in I} [\bm{\iota}] (\bm{\delta} \imp p))$ is now false. So that $\Delta'\not\subseteq f_{\phi(p)}(\Delta')$. This breaks monotonicity and seems to spell trouble if such a $\Delta'$ extends a permutation invariant $\Delta''$ for which $\Delta'' = f_{\phi(p)}(\Delta'')$, because we now have that $\Delta'' \subseteq \Delta'$ whereas $f_{\phi(p)}(\Delta'') \not\subseteq f_{\phi(p)}(\Delta')$. But such a $f_{\phi(p)}(\Delta'')$ does not contain members of $\Delta'{\setminus}\Delta''$, and in case $f_{\phi(p)}(\Delta'')$ was not maximal there must be a permutation invariant $\Delta'''$ such that $f_{\phi(p)}(\Delta'') \subseteq f_{\phi(p)}(\Delta''')$, where it may even be that $\Delta'' \subset \Delta' \subset \Delta'''$. (As in Section~\ref{solvableunsolvable} wherein $\{111\} \subset \{111,011\} \subset (\text{complement of } \{000\})$. In $111$ and in the seven-state restriction excluding $000$ all finally learn whether they are muddy. But not in $\{111,011\}$.)
\end{itemize} 
It is therefore still the case that the greatest fixpoint we want is the union of all permutation-invariant $\Delta'$ such that $\Delta'\subseteq f_{\phi(p)}(\Delta')$. It exists! Work done.

\paragraph{Arbitrary iteration again} Having come this far, why not consider another fixpoint, namely for the arbitrary iteration $\dia{\bell^\notlast}^* \bell^\last$, that so far we took to be the abbreviation of the maximum iteration $\dia{\bell^\notlast}^{<\max} \bell^\last$. This is actually straightforward. We can also define $\dia{\bell^\notlast}^*\bell^\last$ as the least fixpoint: \[ \mu q. \bell^\last \vel \dia{\bell^\notlast} q \] First, it is easy to see that the single occurrence of $q$ in $\bell^\last \vel \dia{\bell^\notlast} q$ remains positive in the multi-agent epistemic logic formula that we obtain by the rewriting axioms of public announcement logic. Second, the existence of an equivalent epistemic formula wherein $q$ is positive in a standard modal $\mu$-calculus therefore guarantees the existence of a least fixpoint by monotonicity. Third, in the $\muetzen$ model, that is a finite model, the iteration constructing the fixpoint will still be finite, which is an easy fix in any other sense of the word as well. But we can also easily constructively see the monotonicity because the fixpoint is the already computed extension of $p$ of $\nu p. \phi(p)$, and bottom-up we see that every singleton colour distribution satisfies $\bell^\last$. 

So we can even formalize $\solvable$ as follows. Two fixpoints! 
\[ \solvable \quad := \quad \nu p. \Et_{\delta\in\Delta} ((\bm{\delta}\imp p) \imp \Et_{\iota\in I} [\bm{\iota}] (\bm{\delta} \imp p)) \et \dia{p} (\mu q. \bell^\last \vel \dia{\bell^\notlast} q)\]
Again have that the solution of the Woeginger M\"utzen puzzle is formalized as:
\begin{quote}
Determine $n \in \Naturals^+$ such that $\muetzen,\delta \models \dia{\solvable}\dia{\bell^\notlast}^{n-1}\bell^\last$.
\end{quote}
So, merely asking once more: what is $n$? What a wonderful and creative hat problem M\"utzen is.



\paragraph*{Acknowledgements} I kindly acknowledge discussions with and encouragement or comments from Marta B\'ilkov\'a, Tim French, Malvin Gattinger, Barteld Kooi, Alexander Kurz, Roman Kuznets, Clara Lerouvillois, Yanjun Li, and Thomas {\AA}gotnes. In particular, an early stage of this research involved much interactions and discussions with Barteld Kooi during my visit to Groningen in April 2024. I also very much thank the TARK reviewers for their comments. One reviewer found an error in one of the variations of M\"utzen presented: you are correct, pink and yellow only leave at ring $9$, not at ring $8$. I acknowledge support from the {\em Wolfgang Pauli Institute} (WPI) in Vienna where I was a Fellow during the completion of this research.

\bibliographystyle{eptcs}
\bibliography{biblio2025}

\providecommand{\noopsort}[1]{}
\begin{thebibliography}{10}
\providecommand{\bibitemdeclare}[2]{}
\providecommand{\surnamestart}{}
\providecommand{\surnameend}{}
\providecommand{\urlprefix}{Available at }
\providecommand{\url}[1]{\texttt{#1}}
\providecommand{\href}[2]{\texttt{#2}}
\providecommand{\urlalt}[2]{\href{#1}{#2}}
\providecommand{\doi}[1]{doi:\urlalt{https://doi.org/#1}{#1}}
\providecommand{\eprint}[1]{arXiv:\urlalt{https://arxiv.org/abs/#1}{#1}}
\providecommand{\bibinfo}[2]{#2}

\bibitemdeclare{inproceedings}{BaltagBF22}
\bibitem{BaltagBF22}
\bibinfo{author}{A.~\surnamestart Baltag\surnameend},
  \bibinfo{author}{N.~\surnamestart Bezhanishvili\surnameend} \&
  \bibinfo{author}{D.~\surnamestart Fern{\'{a}}ndez{-}Duque\surnameend}
  (\bibinfo{year}{2022}): \emph{\bibinfo{title}{The Topology of Surprise}}.
\newblock In \bibinfo{editor}{G.~\surnamestart Kern{-}Isberner\surnameend},
  \bibinfo{editor}{G.~\surnamestart Lakemeyer\surnameend} \&
  \bibinfo{editor}{T.~\surnamestart Meyer\surnameend}, editors: {\slshape
  \bibinfo{booktitle}{Proc.\ of 19th {KR}}}, \doi{10.24963/kr.2022/4}.
\newblock \urlprefix\url{https://proceedings.kr.org/2022/4/}.

\bibitemdeclare{article}{barwise:saos}
\bibitem{barwise:saos}
\bibinfo{author}{J.~\surnamestart Barwise\surnameend} (\bibinfo{year}{1981}):
  \emph{\bibinfo{title}{Scenes and other situations}}.
\newblock {\slshape \bibinfo{journal}{Journal of Philosophy}}
  \bibinfo{volume}{78}(\bibinfo{number}{7}), pp. \bibinfo{pages}{369--397},
  \doi{10.2307/2026481}.

\bibitemdeclare{article}{BELLE2022103842}
\bibitem{BELLE2022103842}
\bibinfo{author}{V.~\surnamestart Belle\surnameend},
  \bibinfo{author}{T.~\surnamestart Bolander\surnameend},
  \bibinfo{author}{A.~\surnamestart Herzig\surnameend} \&
  \bibinfo{author}{B.~\surnamestart Nebel\surnameend} (\bibinfo{year}{2023}):
  \emph{\bibinfo{title}{Epistemic planning: Perspectives on the special
  issue}}.
\newblock {\slshape \bibinfo{journal}{Artificial Intelligence}}
  \bibinfo{volume}{316}, p. \bibinfo{pages}{103842},
  \doi{10.1016/j.artint.2022.103842}.

\bibitemdeclare{inproceedings}{jfaketal.mu:2008}
\bibitem{jfaketal.mu:2008}
\bibinfo{author}{J.~\surnamestart van Benthem\surnameend} \&
  \bibinfo{author}{D.~\surnamestart Ikegami\surnameend} (\bibinfo{year}{2008}):
  \emph{\bibinfo{title}{Modal Fixed-Point Logic and Changing Models}}.
\newblock In \bibinfo{editor}{A.~\surnamestart Avron\surnameend},
  \bibinfo{editor}{N.~\surnamestart Dershowitz\surnameend} \&
  \bibinfo{editor}{A.~\surnamestart Rabinovich\surnameend}, editors: {\slshape
  \bibinfo{booktitle}{Pillars of Computer Science}},
  \bibinfo{publisher}{Springer}, pp. \bibinfo{pages}{146--165},
  \doi{10.1007/978-3-540-78127-1\_9}.
\newblock \bibinfo{note}{LNCS 4800}.

\bibitemdeclare{article}{bozzellietal.inf:2014}
\bibitem{bozzellietal.inf:2014}
\bibinfo{author}{L.~\surnamestart Bozzelli\surnameend},
  \bibinfo{author}{H.~\surnamestart van Ditmarsch\surnameend},
  \bibinfo{author}{T.~\surnamestart French\surnameend},
  \bibinfo{author}{J.~\surnamestart Hales\surnameend} \&
  \bibinfo{author}{S.~\surnamestart Pinchinat\surnameend}
  (\bibinfo{year}{2014}): \emph{\bibinfo{title}{Refinement Modal Logic}}.
\newblock {\slshape \bibinfo{journal}{Information and Computation}}
  \bibinfo{volume}{239}, pp. \bibinfo{pages}{303--339},
  \doi{10.1016/j.ic.2014.07.013}.

\bibitemdeclare{proceedings}{hvdetal.handbook:2015}
\bibitem{hvdetal.handbook:2015}
\bibinfo{editor}{H.~\surnamestart van Ditmarsch\surnameend},
  \bibinfo{editor}{J.Y. \surnamestart Halpern\surnameend},
  \bibinfo{editor}{W.~\surnamestart van~der Hoek\surnameend} \&
  \bibinfo{editor}{B.~\surnamestart Kooi\surnameend}, editors
  (\bibinfo{year}{2015}): \emph{\bibinfo{title}{Handbook of epistemic logic}}.
  \bibinfo{publisher}{College Publications}.

\bibitemdeclare{article}{DitmarschHL12}
\bibitem{DitmarschHL12}
\bibinfo{author}{H.~\surnamestart van Ditmarsch\surnameend},
  \bibinfo{author}{A.~\surnamestart Herzig\surnameend} \&
  \bibinfo{author}{T.~\surnamestart de~Lima\surnameend} (\bibinfo{year}{2012}):
  \emph{\bibinfo{title}{Public announcements, public assignments and the
  complexity of their logic}}.
\newblock {\slshape \bibinfo{journal}{Journal of Applied Non-Classical Logics}}
  \bibinfo{volume}{22}(\bibinfo{number}{3}), pp. \bibinfo{pages}{249--273},
  \doi{10.1080/11663081.2012.705964}.

\bibitemdeclare{article}{hvdetal.sitcalc:2011}
\bibitem{hvdetal.sitcalc:2011}
\bibinfo{author}{H.~\surnamestart van Ditmarsch\surnameend},
  \bibinfo{author}{A.~\surnamestart Herzig\surnameend} \&
  \bibinfo{author}{T.~De \surnamestart Lima\surnameend} (\bibinfo{year}{2011}):
  \emph{\bibinfo{title}{From Situation Calculus to Dynamic Epistemic Logic}}.
\newblock {\slshape \bibinfo{journal}{Journal of Logic and Computation}}
  \bibinfo{volume}{21(2)}, pp. \bibinfo{pages}{179--204},
  \doi{10.1093/logcom/exq024}.

\bibitemdeclare{inproceedings}{hvdetal.aamas:2005}
\bibitem{hvdetal.aamas:2005}
\bibinfo{author}{H.~\surnamestart van Ditmarsch\surnameend},
  \bibinfo{author}{W.~\surnamestart van~der Hoek\surnameend} \&
  \bibinfo{author}{B.~\surnamestart Kooi\surnameend} (\bibinfo{year}{2005}):
  \emph{\bibinfo{title}{Dynamic Epistemic Logic with Assignment}}.
\newblock In: {\slshape \bibinfo{booktitle}{Proc.\ of 4th AAMAS}},
  \bibinfo{publisher}{ACM}, pp. \bibinfo{pages}{141--148},
  \doi{10.1007/978-1-4020-5839-4_2}.

\bibitemdeclare{book}{hvdetal.del:2007}
\bibitem{hvdetal.del:2007}
\bibinfo{author}{H.~\surnamestart van Ditmarsch\surnameend},
  \bibinfo{author}{W.~\surnamestart van~der Hoek\surnameend} \&
  \bibinfo{author}{B.~\surnamestart Kooi\surnameend} (\bibinfo{year}{2007}):
  \emph{\bibinfo{title}{Dynamic Epistemic Logic}}.
\newblock {\slshape \bibinfo{series}{Synthese Library}} \bibinfo{volume}{337},
  \bibinfo{publisher}{Springer}, \doi{10.1007/978-1-4020-5839-4}.

\bibitemdeclare{article}{hvdetal.synthese:2006}
\bibitem{hvdetal.synthese:2006}
\bibinfo{author}{H.~\surnamestart van Ditmarsch\surnameend} \&
  \bibinfo{author}{B.~\surnamestart Kooi\surnameend} (\bibinfo{year}{2006}):
  \emph{\bibinfo{title}{The Secret of My Success}}.
\newblock {\slshape \bibinfo{journal}{Synthese}} \bibinfo{volume}{151}, pp.
  \bibinfo{pages}{201--232}, \doi{10.1007/s11229-005-3384-9}.

\bibitemdeclare{incollection}{hvdetal.world:2008}
\bibitem{hvdetal.world:2008}
\bibinfo{author}{H.~\surnamestart van Ditmarsch\surnameend} \&
  \bibinfo{author}{B.~\surnamestart Kooi\surnameend} (\bibinfo{year}{2008}):
  \emph{\bibinfo{title}{Semantic Results for Ontic and Epistemic Change}}.
\newblock In: {\slshape \bibinfo{booktitle}{Proc.\ of 7th LOFT}},
  \bibinfo{series}{Texts in Logic and Games 3}, \bibinfo{publisher}{Amsterdam
  University Press}, pp. \bibinfo{pages}{87--117}.

\bibitemdeclare{book}{hvdetal.puzzle:2015}
\bibitem{hvdetal.puzzle:2015}
\bibinfo{author}{H.~\surnamestart van Ditmarsch\surnameend} \&
  \bibinfo{author}{B.~\surnamestart Kooi\surnameend} (\bibinfo{year}{2015}):
  \emph{\bibinfo{title}{One Hundred Prisoners and a Light Bulb}}.
\newblock \bibinfo{publisher}{Copernicus}, \doi{10.1007/978-3-319-16694-0}.

\bibitemdeclare{article}{hvdetal.jlc:2007}
\bibitem{hvdetal.jlc:2007}
\bibinfo{author}{H.~\surnamestart van Ditmarsch\surnameend},
  \bibinfo{author}{J.~\surnamestart Ruan\surnameend} \&
  \bibinfo{author}{R.~\surnamestart Verbrugge\surnameend}
  (\bibinfo{year}{2007}): \emph{\bibinfo{title}{Sum and Product in Dynamic
  Epistemic Logic}}.
\newblock {\slshape \bibinfo{journal}{Journal of Logic and Computation}}
  \bibinfo{volume}{18(4)}, pp. \bibinfo{pages}{563--588},
  \doi{10.1093/logcom/exm081}.

\bibitemdeclare{incollection}{hvdetal.aimlbook:2005}
\bibitem{hvdetal.aimlbook:2005}
\bibinfo{author}{H.~\surnamestart van Ditmarsch\surnameend},
  \bibinfo{author}{W.~\surnamestart {van der Hoek}\surnameend} \&
  \bibinfo{author}{B.~\surnamestart Kooi\surnameend} (\bibinfo{year}{2005}):
  \emph{\bibinfo{title}{Public Announcements and Belief Expansion}}.
\newblock In \bibinfo{editor}{R.~\surnamestart Schmidt\surnameend},
  \bibinfo{editor}{I.~\surnamestart Pratt-Hartmann\surnameend},
  \bibinfo{editor}{M.~\surnamestart Reynolds\surnameend} \&
  \bibinfo{editor}{H.~\surnamestart Wansing\surnameend}, editors: {\slshape
  \bibinfo{booktitle}{Proc.\ of 5th {AiML}}}, \bibinfo{publisher}{King's
  College Publications}, pp. \bibinfo{pages}{335--346}.

\bibitemdeclare{book}{rabelais:1823}
\bibitem{rabelais:1823}
\bibinfo{author}{C.~Esmangart \surnamestart et~E.~Johanneau\surnameend}
  (\bibinfo{year}{1823}): \emph{\bibinfo{title}{Oeuvres de Rabelais, ouvrage
  posthume}}.
\newblock \bibinfo{volume}{1}, \bibinfo{publisher}{Dalibon},
  \bibinfo{address}{Paris}.

\bibitemdeclare{incollection}{Emde:84}
\bibitem{Emde:84}
\bibinfo{author}{P.~\surnamestart van Emde~Boas\surnameend},
  \bibinfo{author}{J.~\surnamestart Groenendijk\surnameend} \&
  \bibinfo{author}{M.~\surnamestart Stokhof\surnameend} (\bibinfo{year}{1984}):
  \emph{\bibinfo{title}{The {C}onway Paradox: Its Solution in an Epistemic
  Framework}}.
\newblock In: {\slshape \bibinfo{booktitle}{Truth, Interpretation and
  Information}}, \bibinfo{publisher}{Foris Publications},
  \bibinfo{address}{Dordrecht}, pp. \bibinfo{pages}{159--182},
  \doi{10.1515/9783110867602.159}.

\bibitemdeclare{article}{freudenthal:1969}
\bibitem{freudenthal:1969}
\bibinfo{author}{H.~\surnamestart Freudenthal\surnameend}
  (\bibinfo{year}{1969}): \emph{\bibinfo{title}{Problem {N}o.\ 223 (formulation
  of the Sum and Product problem)}}.
\newblock {\slshape \bibinfo{journal}{Nieuw Archief voor Wiskunde}}
  \bibinfo{volume}{3(17)}, p. \bibinfo{pages}{152}.

\bibitemdeclare{book}{gamowetal:1958}
\bibitem{gamowetal:1958}
\bibinfo{author}{G.~\surnamestart Gamow\surnameend} \&
  \bibinfo{author}{M.~\surnamestart Stern\surnameend} (\bibinfo{year}{1958}):
  \emph{\bibinfo{title}{Puzzle-Math}}.
\newblock \bibinfo{publisher}{Macmillan}, \bibinfo{address}{London}.

\bibitemdeclare{article}{HalpernM86}
\bibitem{HalpernM86}
\bibinfo{author}{J.~\surnamestart Halpern\surnameend} \&
  \bibinfo{author}{Y.~\surnamestart Moses\surnameend} (\bibinfo{year}{1986}):
  \emph{\bibinfo{title}{Taken by surprise: The paradox of the surprise test
  revisited}}.
\newblock {\slshape \bibinfo{journal}{J. Philos. Log.}}
  \bibinfo{volume}{15}(\bibinfo{number}{3}), pp. \bibinfo{pages}{281--304},
  \doi{10.1007/BF00248573}.

\bibitemdeclare{book}{hintikka:1962}
\bibitem{hintikka:1962}
\bibinfo{author}{J.~\surnamestart Hintikka\surnameend} (\bibinfo{year}{1962}):
  \emph{\bibinfo{title}{Knowledge and Belief}}.
\newblock \bibinfo{publisher}{Cornell University Press}.

\bibitemdeclare{inproceedings}{Kozen82}
\bibitem{Kozen82}
\bibinfo{author}{D.~\surnamestart Kozen\surnameend} (\bibinfo{year}{1982}):
  \emph{\bibinfo{title}{Results on the Propositional
  {\(\mathrm{\mu}\)}-Calculus}}.
\newblock In \bibinfo{editor}{M.~\surnamestart Nielsen\surnameend} \&
  \bibinfo{editor}{E.~Meineche \surnamestart Schmidt\surnameend}, editors:
  {\slshape \bibinfo{booktitle}{Automata, Languages and Programming, 9th
  Colloquium, Proceedings}}, \bibinfo{series}{{LNCS} 140},
  \bibinfo{publisher}{Springer}, pp. \bibinfo{pages}{348--359},
  \doi{10.1007/BFB0012782}.

\bibitemdeclare{book}{kraitchik:1942}
\bibitem{kraitchik:1942}
\bibinfo{author}{M.~\surnamestart Kraitchik\surnameend} (\bibinfo{year}{1942}):
  \emph{\bibinfo{title}{Mathematical Recreations}}.
\newblock \bibinfo{publisher}{W. W. Norton}, \bibinfo{address}{New York}.

\bibitemdeclare{unpublished}{yanjunetal:2025}
\bibitem{yanjunetal:2025}
\bibinfo{author}{Y.~\surnamestart Li\surnameend},
  \bibinfo{author}{J.~\surnamestart Ren\surnameend} \&
  \bibinfo{author}{T.~\surnamestart {\AA}gotnes\surnameend}
  (\bibinfo{year}{2025}): \emph{\bibinfo{title}{The Surprise Exam in Full Modal
  Fixed-Point Logic}}.
\newblock \bibinfo{note}{To appear in Proc.\ of 6th {CLAR} (International
  Conference on Logic and Argumentation)}.

\bibitemdeclare{book}{littlewood:1953}
\bibitem{littlewood:1953}
\bibinfo{author}{J.E. \surnamestart Littlewood\surnameend}
  (\bibinfo{year}{1953}): \emph{\bibinfo{title}{A Mathematician's Miscellany}}.
\newblock \bibinfo{publisher}{Methuen and Company}.

\bibitemdeclare{incollection}{mccarthy:1978}
\bibitem{mccarthy:1978}
\bibinfo{author}{J.~\surnamestart McCarthy\surnameend} (\bibinfo{year}{1990}):
  \emph{\bibinfo{title}{Formalization of two puzzles involving knowledge}}.
\newblock In \bibinfo{editor}{V.~\surnamestart Lifschitz\surnameend}, editor:
  {\slshape \bibinfo{booktitle}{Formalizing Common Sense : Papers by John
  McCarthy}}, \bibinfo{publisher}{Ablex Publishing Corporation},
  \bibinfo{address}{Norwood, N.J.}

\bibitemdeclare{article}{milleretal:2005}
\bibitem{milleretal:2005}
\bibinfo{author}{J.S. \surnamestart Miller\surnameend} \& \bibinfo{author}{L.S.
  \surnamestart Moss\surnameend} (\bibinfo{year}{2005}):
  \emph{\bibinfo{title}{The Undecidability of Iterated Modal Relativization}}.
\newblock {\slshape \bibinfo{journal}{Studia Logica}} \bibinfo{volume}{79(3)},
  pp. \bibinfo{pages}{373--407}, \doi{10.1007/s11225-005-3612-9}.

\bibitemdeclare{incollection}{moore:1942}
\bibitem{moore:1942}
\bibinfo{author}{G.E. \surnamestart Moore\surnameend} (\bibinfo{year}{1942}):
  \emph{\bibinfo{title}{A reply to my critics}}.
\newblock In \bibinfo{editor}{P.A. \surnamestart Schilpp\surnameend}, editor:
  {\slshape \bibinfo{booktitle}{The Philosophy of G.E. Moore}},
  \bibinfo{publisher}{Northwestern University}, \bibinfo{address}{Evanston},
  pp. \bibinfo{pages}{535--677}.

\bibitemdeclare{article}{mosesetal:1986}
\bibitem{mosesetal:1986}
\bibinfo{author}{Y.O. \surnamestart Moses\surnameend},
  \bibinfo{author}{D.~\surnamestart Dolev\surnameend} \& \bibinfo{author}{J.Y.
  \surnamestart Halpern\surnameend} (\bibinfo{year}{1986}):
  \emph{\bibinfo{title}{Cheating husbands and other stories: a case study in
  knowledge, action, and communication}}.
\newblock {\slshape \bibinfo{journal}{Distributed Computing}}
  \bibinfo{volume}{1}(\bibinfo{number}{3}), pp. \bibinfo{pages}{167--176},
  \doi{10.1007/BF01661170}.

\bibitemdeclare{inproceedings}{parikh1992}
\bibitem{parikh1992}
\bibinfo{author}{R.~\surnamestart Parikh\surnameend} (\bibinfo{year}{1992}):
  \emph{\bibinfo{title}{Finite and Infinite Dialogues}}.
\newblock In \bibinfo{editor}{Y.~\surnamestart Moschovakis\surnameend}, editor:
  {\slshape \bibinfo{booktitle}{Logic from Computer Science}},
  \bibinfo{publisher}{Springer}, pp. \bibinfo{pages}{481--497},
  \doi{10.1007/978-1-4612-2822-6_18}.

\bibitemdeclare{inproceedings}{plaza:1989}
\bibitem{plaza:1989}
\bibinfo{author}{J.A. \surnamestart Plaza\surnameend} (\bibinfo{year}{1989}):
  \emph{\bibinfo{title}{Logics of Public Communications}}.
\newblock In: {\slshape \bibinfo{booktitle}{Proc.\ of the 4th ISMIS}},
  \bibinfo{publisher}{Oak Ridge National Laboratory}, pp.
  \bibinfo{pages}{201--216}.

\bibitemdeclare{book}{smullyan:1982}
\bibitem{smullyan:1982}
\bibinfo{author}{R.M. \surnamestart Smullyan\surnameend}
  (\bibinfo{year}{1982}): \emph{\bibinfo{title}{Lady or the Tiger? And Other
  Logic Puzzles Including a Mathematical Novel That Features Godel's Great
  Discovery}}.
\newblock \bibinfo{publisher}{Random House}, \bibinfo{address}{New York}.

\bibitemdeclare{article}{vantilburg:1956}
\bibitem{vantilburg:1956}
\bibinfo{author}{G.~\surnamestart van Tilburg\surnameend}
  (\bibinfo{year}{1956}): \emph{\bibinfo{title}{Doe wel en zie niet om}}.
\newblock {\slshape \bibinfo{journal}{Katholieke Illustratie}}
  \bibinfo{volume}{90(32)}, p.~\bibinfo{pages}{47}.
\newblock \bibinfo{note}{Breinbrouwsel 137}.

\bibitemdeclare{unpublished}{woegingermuetzen:2013}
\bibitem{woegingermuetzen:2013}
\bibinfo{author}{G.~\surnamestart Woeginger\surnameend} (\bibinfo{year}{2013}):
  \emph{\bibinfo{title}{M\"utzen}}.
\newblock
  \urlprefix\url{http://www.mathekalender.de/wp/de/kalender/challenge-archive/}.
\newblock \bibinfo{note}{Problem no.\ 10 in the 2013 {MATH+} Advent Calender}.

\bibitemdeclare{unpublished}{xing:2024}
\bibitem{xing:2024}
\bibinfo{author}{H.~\surnamestart Xing\surnameend} (\bibinfo{year}{2024}):
  \emph{\bibinfo{title}{The extension of modal $\mu$-calculus based on
  covariant-contravariant refinement}}.
\newblock \bibinfo{note}{Unpublished manuscript}.

\end{thebibliography}

\section*{Appendix: Gerhard Woeginger's Hat Problem \cite{woegingermuetzen:2013}}

Der Weihnachtsmann hat 126 Intelligenzwichtel zu einem gem\"utlichen Nachmittag mit Kaffee und Kuchen eingeladen. Als die Wichtel den Saal betreten, bekommt jeder von ihnen hinterr\"ucks eine neue Wichtelm\"utze auf den Kopf gesetzt. Das geht blitzschnell, sodass keiner von ihnen die Farbe der eigenen M\"utze zu sehen kriegt. Der Weihnachtsmann er\"offnet das Treffen mit einer kurzen Rede.

\begin{quote} {\em Meine lieben Intelligenzwichtel! Wir wollen diesen Nachmittag mit einem kleinen Denkspiel beginnen. Keiner von euch kennt die Farbe der eigenen M\"utze, und jeder von euch kann die M\"utzen aller anderen 125 Wichtel sehen. Ziel dieses Spieles ist es, die Farbe der eigenen M\"utze m\"oglichst schnell und durch reines Nachdenken herauszufinden. Ich werde nun im F\"unf-Minuten-Takt mit meiner grossen Weihnachtsglocke l\"auten. Wenn einer die eigene M\"utzenfarbe herausgefunden hat, so muss er beim n\"achsten L\"auten sofort den Saal verlassen. Im Nebenzimmer bekommt er dann eine Tasse Kaffee und ein grosses St\"uck Sachertorte serviert.} \end{quote}
Der Weihnachtsmann will gerade zur Glocke gehen, als dem Wichtel Atto eine wichtige Frage einf\"allt: \begin{quote} {\em Ja ist es denn wirklich f\"ur jeden von uns m\"oglich, seine M\"utzenfarbe durch logisches Denken zu bestimmen? Wenn zum Beispiel jeder von uns eine andere M\"utzenfarbe h\"atte, dann k\"onnte wohl niemand seine Farbe durch Denken herausfinden. Dann w\"are das Spiel f\"ur uns doch nicht zu gewinnen!} \end{quote} Der Weihnachtsmann entgegnet ihm ein wenig unwirsch: \begin{quote} {\em H\"atte, k\"onnte, w\"are!!! Nat\"urlich kann jeder von euch dieses Spiel gewinnen! Ich habe die M\"utzenfarben sehr sorgf\"altig ausgew\"ahlt, sodass jeder von euch tats\"achlich seine Farbe im Laufe des Spiels durch Denken herleiten kann.} \end{quote} Und dann beginnen die Wichtel zu denken. Und der Weihnachtsmann beginnt zu l\"auten.
\begin{itemize}
\item Beim ersten L\"auten verlassen Atto und neun andere Wichtel den Saal.
\item Beim zweiten L\"auten gehen alle Wichtel mit butterblumengelben, dotterblumengelben, \\ schl\"usselblumengelben und sonnenblumengelben M\"utzen aus dem Saal.
\item Beim dritten L\"auten gehen alle Wichtel mit karmesinroten M\"utzen, beim vierten L\"auten alle mit kaktusgr\"unen M\"utzen, beim f\"unften L\"auten alle mit aquamarinblauen M\"utzen, beim sechsten L\"auten alle mit goldorangen M\"utzen, beim siebten L\"auten alle mit bernsteinbraunen M\"utzen und beim achten L\"auten gehen alle mit muschelgrauen M\"utzen.
\item Beim neunten, zehnten, elften und zw\"olften L\"auten verl\"asst niemand den Saal.
\item Beim dreizehnten L\"auten gehen alle Wichtel mit bl\"utenweissen und alle Wichtel
mit \\ ebenholzschwarzen M\"utzen.
\end{itemize}
Und so geht es weiter. Beim $N$-ten L\"auten des Weihnachtsmanns verl\"asst schlie{\ss}lich die letzte Wichtelgruppe den Saal. Ganze sieben Mal hat der Weihnachtsmann zwischendurch gel\"autet, ohne dass jemand aus dem Saal gegangen w\"are (und bei diesen sieben Malen sind das neunte, zehnte, elfte und zw\"olfte L\"auten bereits mitgez\"ahlt). Unsere Frage lautet nun: Wie gro{\ss} ist $N$? Antwortm\"oglichkeiten:
\[\begin{array}{|ll|ll|ll|ll|ll}
1. & N = 17 \quad & 3. & N = 19 \quad & 5. & N = 21 \quad & 7. & N = 23 \quad & 9. & N = 25 \quad \\
2. & N = 18 & 4. & N = 20 & 6. & N = 22 & 8. & N = 24 & 10. & N = 26 
\end{array}\]

\section*{Appendix: Gerhard Woeginger's Hat Problem \cite{woegingermuetzen:2013} --- English}

Santa Claus invited 126 smartgnomes to a cozy afternoon with coffee and cake. When the gnomes enter the hall, each of them gets a new gnome hat placed on their head from behind. This happens at lightning speed, so that none of them can see the colour of their own hat. Santa opens the meeting with a short speech.
\begin{quote}
My dear smartgnomes! We want to start this afternoon with a little brain teaser. None of you knows the colour of your own hat, and each of you can see the hats of all the other 125 gnomes. The aim of this game is to find out the colour of your own hat as quickly as possible and through pure thinking. I will now ring my big Christmas bell every five minutes. Once someone has found out their own hat colour, he must immediately leave the hall at the next ring of the bell. In the next room he is then served a cup of coffee and a large piece of Sachertorte.
\end{quote}
Santa is just about to ring the bell when gnome Atto comes up with an important question:
\begin{quote} Is it really possible for each of us to determine the colour of our hat through logical thinking? For example, if each of us had a different colour of hat, then no one would be able to figure out what colour it is by mere deduction. Then we couldn't win the game!
\end{quote}
Santa answers him a little gruffly:
\begin{quote}
Would, could, were!!! Of course, each of you can win this game! I chose the hat colours very carefully so that each of you can actually determine their colour through thinking during the game.
\end{quote}
And then the gnomes start thinking. And Santa starts ringing the bell.
\begin{itemize}
\item At the first ring of the bell, Atto and nine other gnomes leave the hall.
\item At the second ring, all the gnomes walk out of the hall wearing buttercup yellow, kingcup yellow, primrose yellow and sunflower yellow hats.
\item At the third ring all the gnomes leave with crimson hats, at the fourth ring all with cactus green hats, at the fifth ring all with aquamarine blue hats, at the sixth ring all with orange hats, at the seventh ring all with amber brown hats and at the eighth ring of the bell all leave with shell-gray hats.
\item At the ninth, tenth, eleventh and twelfth rings no one leaves the hall.
\item At the thirteenth ring, all the gnomes leave with blossom-white hats and all the gnomes with ebony-black hats.
\end{itemize}
And so it goes on. At the Nth ring of Santa, the last group of gnomes finally leaves the hall. Santa rang a total of seven times without anyone leaving the hall (and of these seven times, the ninth, tenth, eleventh and twelfth rings are already counted). Our question now is: What is N? Possible answers:
\[\begin{array}{|ll|ll|ll|ll|ll}
1. & N = 17 \quad & 3. & N = 19 \quad & 5. & N = 21 \quad & 7. & N = 23 \quad & 9. & N = 25 \quad \\
2. & N = 18 & 4. & N = 20 & 6. & N = 22 & 8. & N = 24 & 10. & N = 26 \qquad \hfill \cite{woegingermuetzen:2013}
\end{array}\]

\end{document}